\def\BibTeX{{\rm B\kern-.05em{\sc i\kern-.025em b}\kern-.08em
    T\kern-.1667em\lower.7ex\hbox{E}\kern-.125emX}}
\pgfplotsset{compat=newest}
\pgfplotsset{width=10cm,compat=1.9}
\newtheorem{theorem}{Theorem}[section]
\newtheorem{claim}[theorem]{Claim}
\begin{document}

\title{Automatic Generation of Complete Polynomial Interpolation Hardware Design Space}

\author{\IEEEauthorblockN{Bryce Orloski}
\IEEEauthorblockA{Numerical Hardware Group \\
Intel Corporation\\
bryce.orloski@intel.com}
\and
\IEEEauthorblockN{Samuel Coward}
\IEEEauthorblockA{Numerical Hardware Group \\
Intel Corporation \\
samuel.coward@intel.com}
\and
\IEEEauthorblockN{Theo Drane}
\IEEEauthorblockA{Numerical Hardware Group \\
Intel Corporation\\
theo.drane@intel.com}
}

% \author{\IEEEauthorblockN{Anonymous}
% \IEEEauthorblockA{Anonymous \\
% Anonymous\\
% Anonymous}}
\maketitle

\begin{abstract}
Hardware implementations of complex functions regularly deploy piecewise polynomial approximations. This work determines the complete design space of piecewise polynomial approximations meeting a given accuracy specification. Knowledge of this design space determines the minimum number of regions required to approximate the function accurately enough and facilitates the generation of optimized hardware which is competitive against the state of the art. Targeting alternative hardware technologies simply requires a modified decision procedure to explore the space.
\end{abstract}

\begin{IEEEkeywords}
datapath, elementary function, interpolation
\end{IEEEkeywords}

%%%%%%%%%%%%%%%%%%%%%%%%%%%%%%%%%%%%%%%%%%%%%%%%%%%%%%%%%%%%%%%%%%%%%%%%
% INTRODUCTION
%%%%%%%%%%%%%%%%%%%%%%%%%%%%%%%%%%%%%%%%%%%%%%%%%%%%%%%%%%%%%%%%%%%%%%%%
\section{Introduction}
A common challenge in hardware design is how best to compute complex functions such as the reciprocal, sine, cosine etc. Hardware algorithms to compute elementary functions generally use one of the following techniques: digit-recurrence~~\cite{Muller1994Bkm:AFunctions}, CORDIC (COordinate Rotation Digital Computer) \cite{Volder1959TheTechnique} or piecewise polynomials \cite{Tang1991Table-lookupAnalysis}. In this paper we focus on piecewise quadratic or linear implementations. The degree and number of polynomials used to approximate a function depends on the required precision and target error bound. 
For binary32 interpolation, quadratic is usually sufficient.

We study the following question. Given a fixed-point function to approximate to within some error bound, what is the complete design space of all feasible piecewise quadratic/linear approximations, constrained only by the underlying architecture, described in Figure \ref{fig:interp_arch}. Knowledge of the complete design space allows us to tailor the design space exploration for different hardware targets, without needing to re-generate the design space. Interpreting mathematical bounds as polynomial design space constraints is not a novel concept \cite{Drane2012CorrectlyMultiply-add} but is extended and generalised in this work.

Automatic tools for generating efficient piecewise polynomial hardware already exist \cite{Lee2005OptimizingEvaluation, Detrey2005Table-basedEvaluation,Strollo2011ElementaryApproximations}, one example being FloPoCo \cite{deDinechin2011CustomGenerator,DeDinechin2010AutomaticEvaluation} which uses Sollya \cite{Chevillard2010Sollya:Codes} to generate its polynomial approximations. Sollya uses a modified Remez algorithm~\cite{Brisebarre2007EfficientL-approximations}, which computes minmax polynomial approximations subject to the constraints of finite precision coefficients. Such an approach explores a constrained design space, allowing it to quickly generate high precision approximations. 

We present the mathematics to generate the complete design space of piecewise polynomial approximations to a given function. An example decision procedure is presented along with the resulting hardware. 

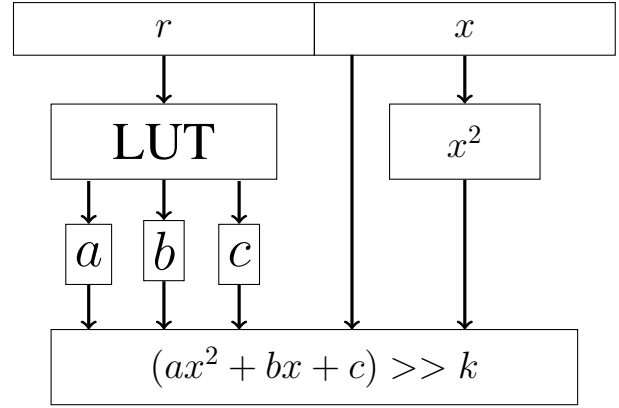
\begin{figure}
    \centering
    \begin{tikzpicture}

%draw input nodes
\node [shape=rectangle,draw = black, minimum width = 4cm,minimum height=0.7cm] at (0,6.5) (r) {\Large $r$};
% \node [shape=rectangle,draw = black, minimum width = 1cm,minimum height=0.7cm] at (2.5,6.5) (x_top) {\Large $s$};
\node [shape=rectangle,draw = black, minimum width = 4cm,minimum height=0.7cm] at (4,6.5) (x) {\Large $x$};

%draw lookup table with nodes output from LUT
\node [shape=rectangle,draw = black, minimum width = 3cm,minimum height=1cm] at (0,5) (lut) {\huge LUT};
\node [shape=rectangle,draw=black, minimum height = 0.8cm] at (-1,3.5) (a) {\huge $a$};
\node [shape=rectangle,draw=black, minimum height = 0.8cm] at (0,3.55) (b) {\huge $b$};
\node [shape=rectangle,draw=black, minimum height = 0.8cm] at (1,3.5)  (c) {\huge $c$};

\node [shape=rectangle,draw = black,minimum height=1cm, minimum width = 2cm] at (4,5) (square) {\Large$x^2$};
% \node[isosceles triangle, draw=black, rotate=270, minimum size =0.5cm] (invert) at (2.5,5.5){};

% \draw[] (2.5,5) circle (0.05);
% \node (invert_tip) at (2.5,5.05){};

%draw output box
\node [shape=rectangle,draw = black,minimum height=1cm, minimum width = 7cm] at (2,2) (poly) {\Large{$(ax^2+bx+c) >>k$}};

\node[] (left_x) at (2.5,6.28) {};
\draw [->,very thick] (left_x) edge (2.5,2.5);
% \draw [->,very thick] (invert_tip) edge (2.5,2);
\draw [->,very thick] (x) edge (square);
%Lines between input and LUT or final output box
\draw [->,very thick] (r) edge (lut);
% \draw [very thick] (2.5,4.5) edge (4.75,4.5);

\node[] (left_lut) at (-1,4.6) {};
\node[] (right_lut) at (1,4.6) {};
%Edges between LUT and outcomes
\draw [->,very thick] (left_lut) edge (a);
\draw [->,very thick] (lut) edge (b);
\draw [->,very thick] (right_lut) edge (c);

\draw [->,very thick] (a) edge (-1,2.5);
\draw [->,very thick] (b) edge (0,2.5);
\draw [->,very thick] (c) edge (1,2.5);

\draw [->,very thick] (square) edge (4,2.5);
\end{tikzpicture}
    \caption{Quadratic interpolation hardware architecture. The most significant bits, $r$, of the input are passed to a lookup table (LUT). The least significant bits, $x$, are used in the polynomial evaluation \cite{Tang1991Table-lookupAnalysis}.}
    \label{fig:interp_arch}
\end{figure}

%%%%%%%%%%%%%%%%%%%%%%%%%%%%%%%%%%%%%%%%%%%%%%%%%%%%%%%%%%%%%%%%%%%%%%%%
% DESIGN SPACE GENERATION
%%%%%%%%%%%%%%%%%%%%%%%%%%%%%%%%%%%%%%%%%%%%%%%%%%%%%%%%%%%%%%%%%%%%%%%%
\section{Design Space Generation} \label{sec:dsg}
The target function, $f$, and accuracy are specified via input and output precisions along with upper and lower bounds across all inputs. Using upper and lower bounds provides maximal flexibility and can even accommodate asymmetric error bounds on a function. Determining the space of feasible quadratic interpolations is reduced to a series of inequalities. 

We use fixed-point notation $n.m$ to denote a format with $n$ integral bits and $m$ fractional bits.  Given $f : n.m \rightarrow p.q$ and integer upper and lower bound functions $u,l: n+m \rightarrow p+q$. For a fixed number of lookup bits, $R$, define the following.
\begin{align*}
    Z &= z_{n-1}...z_0.z_{-1}...z_{-m} &(\textrm{input fixed-point value})\\
    r &= z_{n-1}...z_{n-R} &(\textrm{unsigned integer})\\%(\textrm{unsigned integer value of $R$ msb})
    x &= z_{n-R-1}...z_{-m} &(\textrm{unsigned integer}) %(\textrm{unsigned integer value of lsb}) \\
\end{align*}
Let $l_R(r,x) = l(\{r,x\})$ and $u_R(r,x) = u(\{r,x\})$, where $\{,\}$ denotes concatenation.
Under these definitions the bounds satisfy $2^{-q} l_R(r,x) \leq f(Z) \leq 2^{-q} u_R(r,x)$. 
Fixing an $R$ also fixes an interval for $x\in I=[0,2^{n+m-R}-1]$, so for a given value of $r < 2^R$, a feasible quadratic is defined by the quad $(a,b,c,k)$, where $k$ is the difference between the polynomial evaluation precision and output precision, which satisfies:
\begin{equation*} \label{eqn:initial_constraint}
     \forall x \in I, \, l_R(r,x) \leq \left \lfloor \frac{ax^2 + bx + c}{2^k}  \right \rfloor \leq u_R(r,x)
\end{equation*}
or equivalently (the $x\in I$ being implied in the remainder)
\begin{equation*} 
    \forall x \hspace{1em} l_R(r,x) \leq \frac{ax^2 + bx + c}{2^k} < u_R(r,x) + 1.
\end{equation*}
Rearranging gives necessary and sufficient existence conditions for $c$.
\begin{equation}\label{eqn:c_bounds}
    2^k l_R(r,x) - ax^2 - bx \leq c < 2^k(u_R(r,x) + 1) - ax^2 - bx 
\end{equation}
If a feasible $c$ exists then $\forall x,y$:
\begin{equation} \label{eqn:c_existence}
        2^k l_R(r,x) - ax^2 - bx < 2^k(u_R(r,y) + 1) - ay^2 - by. 
\end{equation}
Introducing, 
\[d(r,x,y)= \frac{u_R(r,y) + 1 - l_R(r,x)}{y-x},\]
reminiscent of a numerical derivative, we can bound $b$, under some assumptions. Eqn. \ref{eqn:c_existence} is trivially true for $x=y$.
\begin{align}
    x < y \Rightarrow b < 2^k d(r,x,y) - a(x+y) \label{eqn:b_upper}\\
    x > y \Rightarrow b > 2^k d(r,x,y) - a(x+y) \label{eqn:b_lower}
    % 2^k d(r,x,y) - a(x+y) < b < 2^k d(r,y,x) - a(x+y) \label{eqn:b existence}
\end{align}
If a feasible $b$ exists then $\forall x<y$ and $w<z$:
\begin{equation} \label{eqn:b_condtion}
    2^k d(r,y,x) - a(x+y) < 2^k d(r,w,z) - a(z+w).
\end{equation}
Eqns. \ref{eqn:c_existence} \& \ref{eqn:b_condtion} are necessary and sufficient conditions on existence since $k$ can be increased until the intervals contain an integer. Finally we bound $a$, using expressions that are closely related to numerical second derivatives. 
\begin{align}
    x+y = w+z &: d(r,y,x) < d(r,w,z) \\
    x+y < w+z &: \frac{a}{2^k} < \frac{d(r,w,z)-d(r,y,x)}{w+z-x-y} \label{eqn:a_upper}\\
    x+y > w+z &: \frac{a}{2^k} > \frac{d(r,y,x)-d(r,w,z)}{x+y-w-z} \label{eqn:a_lower}
\end{align}
Introducing further definitions,
\begin{equation*}
    M(r,t) = \max_{\substack{x<y\\ x+y=t}} d(r,y,x),\quad 
    m(r,t) = \min_{\substack{w<z\\ w+z=t}} d(r,w,z).
\end{equation*}
The necessary and sufficient conditions for the existence of a feasible polynomial over a given region specified by $r$ are:
\begin{align}
    \forall t \, M(r,t) &< m(r,t) \textrm{ and} \label{eqn: necc_and_suff_1}\\
    \max_{t<s} \frac{M(r,s) - m(r,t)}{s-t} &< \min_{t<s} \frac{m(r,s) - M(r,t)}{s-t}. \label{eqn: necc_and_suff_2}
\end{align}
These bounds are intuitive because we bound the $b$ coefficient by something resembling a first derivative and $a$ by a second derivative term.

To generate the design space for a value of $R$, we test whether Eqns. \ref{eqn: necc_and_suff_1} \& \ref{eqn: necc_and_suff_2} hold for all $r\in [0, 2^R - 1]$. Satisfiability implies existence of at least one feasible quadratic in each region, and we proceed to establish a dictionary of coefficients. Eqns. \ref{eqn:a_upper} \& \ref{eqn:a_lower} determine an interval $[a_0,a_1]$ of valid $a$ values for $k=0$. Then $\forall a \in [a_0, a_1]$, we solve Eqns. \ref{eqn:b_upper} \& \ref{eqn:b_lower} to generate the interval $[b_0,b_1]$ of valid $b$ values, increasing $k$ if necessary to ensure we obtain at least one valid $b$ in each region. Lastly, for each valid $(a,b)$ pair we solve Eqn. \ref{eqn:c_bounds}, yielding an interval of valid $c$ values. Across all regions $k$ is constant. The result of this process is a nested dictionary of valid polynomial coefficients for fixed $k$ and $R$ values.
% \begin{align*}
    % r       &\rightarrow [a_0,a_1] \hspace{1em} &&\textrm{for }r\in [0, 2^R-1]\\
    % (r,a)   &\rightarrow [b_0,b_1] \hspace{1em} &&\textrm{for }a\in [a_0,a_1] \\
    % (r,a,b) &\rightarrow [c_0,c_1] \hspace{1em} &&\textrm{for }b \in [b_0,b_1].
% \end{align*}

This dictionary represents the complete design space of feasible quadratic polynomials that satisfy the given upper and lower bound functions across the complete input space. If $\forall r\in[0,2^R - 1]$, $0\in[a_0,a_1]$ then a piecewise linear approximation will suffice, resulting in smaller and faster hardware.

\subsection{Performance}
Design space generation involves many 2-D searches across a potentially large search space, evaluating expressions of the form $\max_{x<y} D(x,y)$, where $D(x,y)~=~\frac{g(y)-h(x)}{y-x}$, for some $g$ and $h$, or the minimum of such expressions. To improve scalability we optimise these searches. A naive implementation would iterate across the complete 2-D space. In practice, we skip iterations of this search due to claim \ref{claim:skip_iter}.

\begin{claim}\label{claim:skip_iter}
Let $(x',y')$ be the arguments which maximise $D(x,y)$ across all $x<x'$. Then for $x>x'$,
\[\hspace{0.5em} D(x',y') \leq \frac{h(x) - h(x')}{x-x'} \Rightarrow \nexists y\textrm{ s.t. }D(x,y) > D(x',y').\]

\end{claim}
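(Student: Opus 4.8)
The plan is a short argument by contradiction that is essentially one chain of algebra: clear the (positive) denominators in the two hypotheses and recombine. The only property of the running optimum $(x',y')$ that is actually needed is that it is no worse than any admissible pair already examined; concretely, $D(x',y)\le D(x',y')$ for every admissible $y>x'$, since such a pair $(x',y)$ lies in the set over which $(x',y')$ is maximal. Note also that ``$y$ such that $D(x,y)>D(x',y')$'' tacitly means an \emph{admissible} $y$, i.e.\ $y>x$, since $D$ only enters the search for $x<y$.

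First I would fix $x>x'$ satisfying the hypothesis and abbreviate $c:=D(x',y')$, so that, using $x-x'>0$, the hypothesis rearranges to
\[ h(x)\ \ge\ h(x')+c\,(x-x'). \]
Next, suppose for contradiction that some admissible $y$ (hence $y>x>x'$) satisfies $D(x,y)>c$. Clearing the positive denominator $y-x$ gives $g(y)>h(x)+c\,(y-x)$, and substituting the displayed lower bound on $h(x)$ yields
\[ g(y)\ >\ h(x')+c\,(x-x')+c\,(y-x)\ =\ h(x')+c\,(y-x'). \]
Since $y-x'>0$, dividing through gives $\frac{g(y)-h(x')}{y-x'}>c$, that is $D(x',y)>D(x',y')$. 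This contradicts the maximality of $(x',y')$, because $(x',y)$ is among the pairs over which that maximum was taken. Hence no such $y$ exists, which is exactly the stated implication.

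The derivation contains no hard step; the points that need care are purely bookkeeping. One must track the sign of every denominator that gets cleared --- $x-x'>0$ from the loop order and $x<y$, $x'<y$ from admissibility of the second index --- so that none of the inequalities reverse, and one must state precisely enough what ``$(x',y')$ maximises $D$ across all $x<x'$'' means (namely: it is the best pair found once the outer index has passed $x'$) so that the witness pair $(x',y)$ with $y>x'$ is genuinely covered by that maximality. Finally I would remark that the mirror-image argument, with all inequalities reversed, yields the analogous skip criterion for the searches that compute $\min_{x<y}D(x,y)$.
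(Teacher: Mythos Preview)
Your proof is correct and follows essentially the same route as the paper: assume a counterexample $(x,y)$ and derive $D(x',y)>D(x',y')$, contradicting the running-maximum property of $(x',y')$. The only cosmetic difference is that you linearize from the start (set $c=D(x',y')$ and chain the two affine bounds on $h(x)$ and $g(y)$), whereas the paper first passes through $D(x,y)>D(x',y)$ and cross-multiplies; the ingredients and the contradiction reached are identical.
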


\begin{proof} 
Suppose $\exists x>x' \textrm{ and } y>x \textrm{ s.t. } D(x,y)>D(x',y')$, it follows that $D(x,y)>D(x',y)$, since $D(x',y')$ maximal. Expanding the definition of $D$ and re-arranging,
\begin{equation*}
    (x-x')g(y) + x' h(x) > y(h(x) - h(x')) + x h(x').
\end{equation*}
Subtracting $x'h(x')$ from both sides and re-arranging,
\begin{equation*} 
    D(x',y) > \frac{h(x) - h(x')}{x-x'}.
\end{equation*}
Since $D(x',y') \geq D(x',y)$, we have exactly the converse of the condition in our claim.  
\end{proof}

Using this optimization the runtime of the design space generation is five times faster for a 16 bit reciprocal approximation running single threaded on an Intel Xeon E3-1270 CPU. 

There is a computational tradeoff between the number of regions the input interval is sub-divided into ($R$), versus the input range that each polynomial must span, corresponding to the number of inputs to check for each polynomial. Empirical results for a 16 bit design suggest the runtime is $\mathcal{O}(R^{-3})$.
The design space generation algorithm scales exponentially in the number of bits of precision so these speedup techniques can improve runtimes in practical cases but do not substantially expand the space of computationally feasible designs.

%%%%%%%%%%%%%%%%%%%%%%%%%%%%%%%%%%%%%%%%%%%%%%%%%%%%%%%%%%%%%%%%%%%%%%%%
% DESIGN SPACE EXPLORATION
%%%%%%%%%%%%%%%%%%%%%%%%%%%%%%%%%%%%%%%%%%%%%%%%%%%%%%%%%%%%%%%%%%%%%%%%
\section{Design Space Exploration} \label{sec:dse}
Having generated the design space, we must now derive efficient methods to explore this space. The exploration procedure can be tailored to the target hardware technology, one of the major advantages of generating the complete design space. In the procedures presented here we will have a target number of lookup bits to be used, as the optimal lookup table (LUT) size is non-obvious as we shall see in \S \ref{sec:results}.

There are two distinct paths through the hardware design presented in Figure \ref{fig:interp_arch}, one through the square operation and one through the LUT, since these two execute in parallel. In this work we will assume that the square path is critical and target optimized ASIC designs. Optimisations performed on one part of the design restrict the available optimisations in other parts of the design, so decision procedure tuning is important. The decision procedure used in this work is the following.

\begin{enumerate}
    \item Minimize $k$ - minimize polynomial evaluation precision
    \item Maximize square input truncation
    \item Maximize linear input truncation
    \item Minimize $a$, then $b$, then $c$ bitwidths
    % \item Minimize $b$ bitwidth
    % \item Minimize $c$ bitwidth
\end{enumerate} 

% \begin{figure}
    % \centering
    % \input{dse_procedures}
    % \caption{The LUT Critical and Square Critical DSE procedures. Both procedures start by minimising $k$, which corresponds to the polynomial evaluation precision, then diverge as LUT critical optimises the LUT first (green), whilst square critical optimises the square (red) and linear (blue) polynomial terms.}
    % \label{fig:decision_proc}
% \end{figure}

The procedure begins by minimizing $k$, which is found via Eqns. \ref{eqn:c_existence} \& \ref{eqn:b_condtion}. We then maximise the square truncation, asking what is the maximum integer $i$, such that a valid $a(x[m-1:i])^2 + bx +c$ exists in all regions, where $m$ is the bitwidth of $x$. Intuitively, we often think of higher order terms as correction terms, so we can tolerate some error in them, to gain performance. Such truncation is found in other approaches \cite{Detrey2005Table-basedEvaluation}. Only a subset of the polynomials $(a,b,c)$ for each region can tolerate the error induced by the maximal square truncation, so we discard those that cannot. We then similarly calculate maximal $j$, such that $a(x[m-1:i])^2 + bx[m-1:j] +c$ is still valid, introducing further error and hence further candidates are discarded. Lastly, we minimise the precision required to represent the coefficients, $a$ then $b$ and finally $c$. For each coefficient, we have a set of valid integer values per region, which we separate into positive and negative sets (and take absolute values), then run Algorithm \ref{alg:prec_min} on each set and take the minimum of the two returned precisions. With the precision of the coefficient defined we then prune the dictionary, removing any candidates that require a higher precision. From the remaining feasible polynomials we pick the first polynomial for each region.

\begin{table*}[t]
    \centering
    \caption{Logic synthesis results for minimum obtainable delay target comparing against equivalent Designware components. We select the number of lookup bits (LUB) for the proposed RTL based on the best area-delay product.}
    \begin{tabular}
    {|c|c|c|l|c|c|c|c|c|c|}
        \hline
        \multirow{2}{7em}{Function} & $x\rightarrow y$  & \multicolumn{5}{c|}{Proposed} & \multicolumn{3}{c|}{DesignWare} \\
                 & Num Bits    &         Runtime & \multicolumn{1}{c|}{LUB} & Delay (ns) & Area ($\mu m^2$) & Area $\times$ Delay  & Delay (ns) & Area ($\mu m^2$) & Area $\times$ Delay\\
        \hline
        \multirow{3}{7em}{$0.1y =$\large $\frac{1}{1.x}$}  &   10 $\rightarrow$ 10 & 0.5 sec & 6 (lin) & 0.125 & 43 & \textbf{5.4}   & 0.143  & 79 & 11.3\\
                                                 &  16 $\rightarrow$ 16 & 8.6 sec & 8 (lin) & 0.197 & 290 & 57.2 & 0.189 & 204 & \textbf{38.6} \\
                                                 &  23 $\rightarrow$ 23 & 39 hrs  & 7 (quad) & 0.278 & 689 & 191.5 & 0.275 & 641 & \textbf{176} \\
        \hdashline
        \multirow{3}{7em}{$0.y =\log_2(1.x)$}   &  10 $\rightarrow$ 11 & 0.4 sec & 6 (lin) & 0.112 & 55 & \textbf{6.2} & 0.134 & 80 & 10.8 \\
                                                & 16 $\rightarrow$ 17 & 22 sec  & 8 (lin) & 0.194 & 301 & \textbf{58.4} & 0.226 & 340 & 75.7\\
                                                & 23 $\rightarrow$ 24 & 78 hrs & 7 (quad) & 0.274 & 703 & 193 & 0.281 & 582 & \textbf{164}\\
        \hdashline
        \multirow{2}{7em}{$1.y= 2^{0.x}$} &  10 $\rightarrow$ 10 & 0.4 sec & 5 (lin) & 0.115 & 48  & \textbf{5.5}  & 0.117 & 52  & 6.2 \\
                                            &  16 $\rightarrow$ 16 & 24 sec  & 7 (lin) & 0.190 & 201 & 38.1 & 0.169 & 182 & \textbf{30.7} \\
        \hline        
    \end{tabular}

    \label{tab:results_table}
\end{table*}

\begin{algorithm}
\caption{Precision Minimization Algorithm}\label{alg:prec_min}
\begin{algorithmic}
\Require $S = \{S_r\subseteq \mathbb{N}\,|\, r = 0...2^R-1\}$\\
// \textit{Number of trailing zeros for each element}
\For{$r=0...2^R-1, s\in S_r$}
        \State $T_{r,s} = \max_i \left ( (s>>i)<<i == s \right )$ \Comment{trailing zeros} 
\EndFor
\State $T = \min_{r<2^R} \max_{s\in S_r} T_{r,s}$ \Comment{max valid truncation}\\
// \textit{Calculate the optimal number of zeros to truncate}
\For{$t=0...T$, $r=0...2^R-1$}
        \State $S_{t,r} = \{s \,|\, s\in S_r\textrm{ and }T_{r,s} \geq t\}$ \Comment{prune each set}
        \State $P_{t,r} = \min_{s\in S_{t,r}} (\left \lceil \log_2(s+1) \right \rceil - t)$ \Comment{num bits for $s$}
\EndFor
\State $P = \min_{t\leq T} \max_{r<2^R} P_{t,r}$ \Comment{min precision}
% \State \textbf{return} $P$
\end{algorithmic}
\end{algorithm}

%, which means that we aim to discard as many of the least significant bits being passed into the squaring operation as our approximating polynomial can tolerate. 

Alternative decision procedures were explored, such as prioritizing LUT optimsization, but this yielded inferior area-delay profiles for the generated hardware. Further optimizations such as sum of product truncation were also explored but again yielded worse hardware.

\begin{figure}
    \centering
    \begin{tikzpicture}[scale=0.75]
	\begin{axis}[
		xlabel=Delay (ns),
		ylabel=Area $(\mu m^2)$]
	\addplot[color=blue,mark=o] coordinates {
        (0.278,641.550964)
        (0.3 ,474.504482)
        (0.32,359.031601)
        (0.36,370.905840)
        (0.42,367.691040)
        (0.5,339.948000)
        (0.6,293.846400)
        (0.8,226.923839)
	};
	\addplot[color=red,mark=x] coordinates {
        (0.278,689.184724)
        (0.3,488.334962)
        (0.32,372.957841)
        (0.36,319.482721)
        (0.42,370.632240)
        (0.5,333.149040)
        (0.6,258.374160)
        (0.8,233.61335)
	};
	\legend{DesignWare, Proposed}
	\end{axis}
\end{tikzpicture}
    \caption{23 bit reciprocal using 7 lookup bits to compute $\frac{1}{1.x}$, demonstrating competitiveness across the delay spectrum against industrial state of the art.}
    \label{fig:single_prec_recip}
\end{figure}
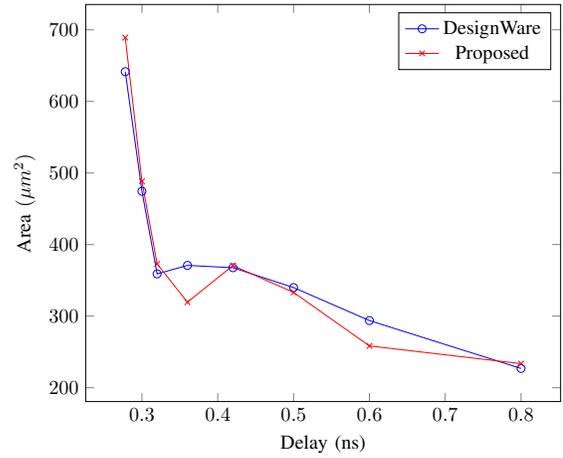
%%%%%%%%%%%%%%%%%%%%%%%%%%%%%%%%%%%%%%%%%%%%%%%%%%%%%%%%%%%%%%%%%%%%%%%%
% RESULTS
%%%%%%%%%%%%%%%%%%%%%%%%%%%%%%%%%%%%%%%%%%%%%%%%%%%%%%%%%%%%%%%%%%%%%%%%
\section{Results} \label{sec:results}
We implemented a piecewise polynomial generating tool using PyPy 3.7. Upper and lower bounding functions are produced using Python's math library or standard integer computations. We automatically generated Register Transfer Level (RTL) implementations of three complex functions: reciprocal, base two logarithm and base two exponential, at relevant precisions and all possible LUT height targets with an accuracy of one unit in the last place (ULP), which matches the default accuracy of FloPoCo \cite{deDinechin2011CustomGenerator} and DesignWare \cite{Synopsys2021DesignS-2021.06-SP2}. Hardware was generated on an Intel Xeon E3-1270 CPU and synthesised using Synopsys Design Compiler for a TSMC 7nm cell library. 

For the reciprocal function, behavioural RTL producing both Round to Zero and Round to $+\infty$ can be written using only integer operations. The generated reciprocal is verified against this behavioural using Synopsys HECTOR technology, a formal equivalence checking tool. For logarithm and exponential, we verified that the hardware generated a result between our Python generated bounds using HECTOR.

Table \ref{tab:results_table} presents logic synthesis results for a number of fixed-point designs and compares them against the industrial state of the art, Synopsys DesignWare \cite{Synopsys2021DesignS-2021.06-SP2}. On average the proposed implementations improve the area-delay product by 7\%. In floating point implementations of functions such as reciprocal and logarithm, the piecewise polynomial approximation is the resource intensive computation since exponent handling is comparatively cheap. These designs could easily be combined with parameterised exponent handling code to generate complete floating point architectures.

Figure \ref{fig:single_prec_recip} presents complete area-delay profiles for the competing 23 bit implementations of the reciprocal function. We note that the proposed hardware is competitive across the delay spectrum offering area improvements at several delay targets. This is respectable since Designware is not static as the architecture selected by logic synthesis varies with delay.

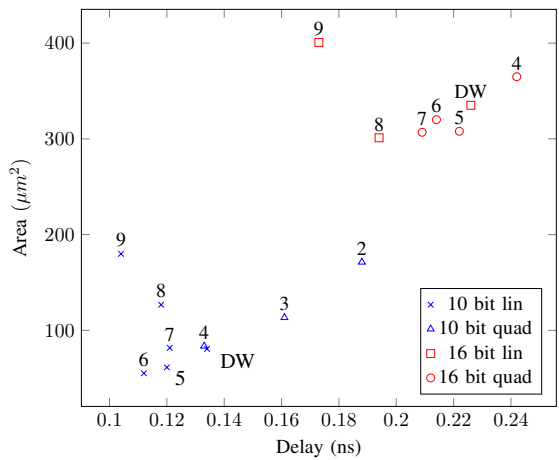
\begin{figure}
    \centering
    \begin{tikzpicture}[scale=0.75]
	\begin{axis}[%
	scatter/classes={%
		10={mark=x,blue},%
		11={mark=triangle,blue},%
		16={mark=square,red},%
		17={mark=o,red}},%
		xlabel=Delay (ns),ylabel=Area $(\mu m^2)$,%
		legend pos=south east]
	\addplot[scatter,only marks,%
		scatter src=explicit symbolic,%
		nodes near coords*={\Label},%
        visualization depends on={value \thisrow{label} \as \Label}]%
	table[meta=bw] {
        delay area bw label
        %  0.107 249.974643 10 DW
        0.188 171.259921 11 2
        0.161 113.338801 11 3
        0.133 83.530080 11 4
        0.112  55.144080 10 6 
        0.121  81.779040 10 7 
         0.118 126.758881 10 8 
         0.104 179.905681 10 9 
         0.242 364.914002 17 4 
         0.222 307.909442 17 5 
         0.214 320.084642 17 6 
         0.209 306.856081 17 7 
         0.194 301.124162 16 8 
         0.173 400.700884 16 9 
        %  0.219 496.515604 16 DW
        %  0.267 471.016084 16 DW
         0.226 335.077922 16 DW
	};
	
	\addplot[scatter,only marks,%
		scatter src=explicit symbolic,%
		every node near coord/.style={anchor=270}*={\Label},%
        visualization depends on={value \thisrow{label} \as \Label}]%
	table[meta=bw] {
        area delay bw label
        0.134  80.670960 10 DW
        0.120  61.341120 10 5
	};
	
	\legend{10 bit lin, 10 bit quad, 16 bit lin, 16 bit quad}
	\end{axis}
	\draw [] (2.3,0.8) node [right] {\footnotesize DW};
	\draw [] (1.5,0.5) node [right] {\footnotesize 5};
\end{tikzpicture}
    \caption{Area-delay points at the minimum obtainable delay target for competing 10 and 16 bit implementations of the base 2 logarithm. Point labels indicate the number of lookup bits used (or DesignWare).}
    \label{fig:lut_plot}
\end{figure}

The exponential runtime scaling discussed in \S \ref{sec:dsg} is apparent in these results with the 23 bit reciprocal approaching the computational limit. In practice, this is not concerning as piecewise polynomial methods are rarely used for high precision approximations. 

One advantage is the ability to easily explore different LUT height architectures. Figure \ref{fig:lut_plot} highlights the challenge of optimising LUT height according to different metrics.

\begin{table}
    \centering
    \caption{Comparison against FloPoCo generated quadratic interpolation architectures with equal LUT height. Table dimensions [$a$ width, $b$ width, $c$ width] = total width.}
    \begin{tabular}{|c|c|c|c|}
         \hline
         Function & Bitwidth & FloPoCo LUT      & Proposed LUT \\
         \hline
         Recip    & 23       & [10,18,26] = 54  & [8,17,37] = 62 \\
         Log$_2$  & 16       & [ 8,15,20] = 43  & [4,14,20] = 38  \\
         Exp    & 10       & [ 6,11,14] = 31  & [2, 9,15] = 26    \\
         \hline
    \end{tabular}

    \label{tab:flopo_comparison}
\end{table}

To compare against the Remez algorithm, we generated equivalent implementations using FloPoCo \cite{deDinechin2011CustomGenerator}, with the results presented in Table \ref{tab:flopo_comparison}. FloPoCo generates narrower tables for the larger bitwidth at the expense of wider $a$ values than  produced here, resulting in larger $a\times x^2$ multiplication arrays. FloPoCo runs in seconds for all testcases. FloPoCo targets FPGAs, so it is not a relevant logic synthesis comparison point. 

\section{Conclusion}
This paper demonstrates a method to generate the complete design space of piecewise polynomial approximations to a complex function for arbitrary accuracy specifications using the given architecture. We generated RTL approximations to the reciprocal, base two logarithm and base two exponential functions and showed that they were competitive with state of the art implementations. Knowledge of the complete design space facilitates easy re-targeting. Generating the complete design space is computationally expensive and therefore only suitable up to binary32 implementations.

Future work will investigate a decision procedure to choose the optimal number of lookup bits. Integration with MPFR would provide arbitrary precision and trusted bounds. Scalability concerns could be addressed by introducing parallelism. 

\bibliographystyle{IEEEtran}
% argument is your BibTeX string definitions and bibliography database(s)
\bibliography{references.bib}

% Generated by IEEEtran.bst, version: 1.14 (2015/08/26)
\begin{thebibliography}{10}
\providecommand{\url}[1]{#1}
\csname url@samestyle\endcsname
\providecommand{\newblock}{\relax}
\providecommand{\bibinfo}[2]{#2}
\providecommand{\BIBentrySTDinterwordspacing}{\spaceskip=0pt\relax}
\providecommand{\BIBentryALTinterwordstretchfactor}{4}
\providecommand{\BIBentryALTinterwordspacing}{\spaceskip=\fontdimen2\font plus
\BIBentryALTinterwordstretchfactor\fontdimen3\font minus
  \fontdimen4\font\relax}
\providecommand{\BIBforeignlanguage}[2]{{%
\expandafter\ifx\csname l@#1\endcsname\relax
\typeout{** WARNING: IEEEtran.bst: No hyphenation pattern has been}%
\typeout{** loaded for the language `#1'. Using the pattern for}%
\typeout{** the default language instead.}%
\else
\language=\csname l@#1\endcsname
\fi
#2}}
\providecommand{\BIBdecl}{\relax}
\BIBdecl

\bibitem{Muller1994Bkm:AFunctions}
J.~M. Muller, ``{Bkm:A New Hardware Algorithm for Complex Elementary
  Functions},'' \emph{IEEE Transactions on Computers}, vol.~43, no.~8, 1994.

\bibitem{Volder1959TheTechnique}
J.~E. Volder, ``{The CORDIC Trigonometric Computing Technique},'' \emph{IRE
  Transactions on Electronic Computers}, vol. EC-8, no.~3, 1959.

\bibitem{Tang1991Table-lookupAnalysis}
P.~T.~P. Tang, ``{Table-lookup algorithms for elementary functions and their
  error analysis},'' in \emph{Proceedings - Symposium on Computer Arithmetic},
  1991.

\bibitem{Drane2012CorrectlyMultiply-add}
T.~Drane, W.~C. Cheung, and G.~Constantinides, ``{Correctly rounded constant
  integer division via multiply-add},'' in \emph{IEEE International Symposium
  on Circuits and Systems}, 2012.

\bibitem{Lee2005OptimizingEvaluation}
D.~U. Lee, A.~A. Gaffar, O.~Mencer, and W.~Luk, ``{Optimizing hardware function
  evaluation},'' \emph{IEEE Transactions on Computers}, vol.~54, no.~12, 2005.

\bibitem{Detrey2005Table-basedEvaluation}
J.~Detrey and F.~De~Dinechin, ``{Table-based polynomials for fast hardware
  function evaluation},'' in \emph{Proceedings - International Conference on
  Application-Specific Systems, Architectures and Processors}, 2005.

\bibitem{Strollo2011ElementaryApproximations}
A.~G.~M. Strollo, D.~De~Caro, and N.~Petra, ``{Elementary functions hardware
  implementation using constrained piecewise-polynomial approximations},''
  \emph{IEEE Transactions on Computers}, vol.~60, no.~3, 2011.

\bibitem{deDinechin2011CustomGenerator}
F.~de~Dinechin and B.~Pasca, ``{Custom Arithmetic Datapath Design for FPGAs
  using the FloPoCo Core Generator},'' \emph{Design {\&} Test of Computers,
  IEEE}, vol.~PP, no.~99, 2011.

\bibitem{DeDinechin2010AutomaticEvaluation}
F.~De~Dinechin, M.~Joldes, and B.~Pasca, ``{Automatic generation of
  polynomial-based hardware architectures for function evaluation},'' in
  \emph{Proceedings - International Conference on Application-Specific Systems,
  Architectures and Processors}, 2010.

\bibitem{Chevillard2010Sollya:Codes}
S.~Chevillard, M.~Jolde{\c{s}}, and C.~Lauter, ``{Sollya: An environment for
  the development of numerical codes},'' in \emph{Lecture Notes in Computer
  Science}, vol. 6327 LNCS, 2010.

\bibitem{Brisebarre2007EfficientL-approximations}
N.~Brisebarre and S.~Chevillard, ``{Efficient polynomial L-approximations},''
  in \emph{Proceedings - Symposium on Computer Arithmetic}, 2007.

\bibitem{Synopsys2021DesignS-2021.06-SP2}
{Synopsys}, ``{Design Compiler User Guide S-2021.06-SP2},'' Synopsys, Mountain
  View, Tech. Rep., 6 2021.

\end{thebibliography}
\end{document}